\documentclass[lettersize,journal]{IEEEtran}

\usepackage{cite}
\usepackage{algorithmic}
\usepackage{amsmath,amsfonts}
\usepackage{algorithm}
\usepackage{array}
\usepackage[caption=false,font=normalsize,labelfont=sf,textfont=sf]{subfig}
\usepackage{textcomp}
\usepackage{stfloats}
\usepackage{url}
\usepackage{verbatim}
\usepackage{graphicx}
\usepackage{amsthm}
\usepackage{booktabs}
\usepackage{xcolor}
\hyphenation{op-tical net-works semi-conduc-tor}

\newtheorem{definition}{Definition}
\begin{document}

\setlength{\abovecaptionskip}{1pt}
\setlength{\belowcaptionskip}{0pt}

\title{Control Analysis and Design for Autonomous Vehicles Subject to Imperfect AI-Based Perception}

\author{Tao Yan, Zheyu Zhang, Jingjing Jiang, and Wen-Hua Chen
\thanks{Tao Yan, Zheyu Zhang, Jingjing Jiang, and Wen-Hua Chen are with the Department of Aeronautical and Automotive Engineering, Loughborough University, Loughborough, LE113TU, U.K.\{t.yan, z.zhang8, j.jiang2,
w.chen\}@lboro.ac.uk }}




\maketitle

\begin{abstract}
Safety is a critical concern in autonomous vehicle (AV) systems, especially when AI-based sensing and perception modules are involved. However, due to the black box nature of AI algorithms, it makes closed-loop analysis and synthesis particularly challenging, for example, establishing closed-loop stability and ensuring performance, while they are fundamental to AV safety. To approach this difficulty, this paper aims to develop new modeling, analysis, and synthesis tools for AI-based AVs. Inspired by recent developments in perception error models (PEMs), the focus is shifted from directly modeling AI-based perception processes to characterizing the perception errors they produce. Two key classes of AI-induced perception errors are considered: misdetection and measurement noise. These error patterns are modeled using continuous-time Markov chains and Wiener processes, respectively. By means of that, a PEM-augmented driving model is proposed, with which we are able to establish the closed-loop stability for a class of AI-driven AV systems via stochastic calculus. Furthermore, a performance-guaranteed output feedback control synthesis method is presented, which ensures both stability and satisfactory performance. The method is formulated as a convex optimization problem, allowing for efficient numerical solutions. The results are then applied to an adaptive cruise control (ACC) scenario, demonstrating their effectiveness and robustness despite the corrupted and misleading perception.
\end{abstract}

\begin{IEEEkeywords}
Autonomous vehicles, perception uncertainty, AI-induced error modeling,  closed-loop stability,  performance guaranteed control synthesis. 
\end{IEEEkeywords}

\section{Introduction}
\IEEEPARstart{A}{UTONOMOUS} vehicles (AVs), due to their automated and intelligent nature, find a broad range of applications not only in industry areas but also for public uses \cite{Nascimento,ELALLID20227366,8818311,Yan215}. One supportive and fundamental technology for AVs to operate at a higher autonomy is attributed to the wide deployment of artificial intelligence (AI) based perception algorithms and models. These AI systems are capable of efficiently processing a vast amount of raw sensory data, such as images and point clouds captured by on-board sensors, and converting them into low-dimensional perceptual information (e.g., positions and velocities of surrounding objects). This processed information is then fed into control modules to facilitate intelligent decision-making and planning \cite{MaYifang, ZHANG2, PREETI2024124664, Everingham}. On the other hand, the behavior and reliability of these components could pose new challenges for AVs to safely and satisfactorily achieve their goals, for example, following a car ahead and adaptive cruise at a constant speed without any collisions.

{In automotive applications, almost all of the tasks are inherently safety-critical. As a result, significant research efforts have been devoted to improving and ensuring the safety of various AI-driven automated driving systems (ADSs)\cite{Zhang45,Nascimento,8818311,ZHANG2,Piazzoni,PREETI2024124664}. In industry, one of the most common and cost-effective methods for evaluating the behavior of AVs is through virtual simulation testing, by which many typical scenarios and contexts that AVs may encounter in reality can be easily set and tested \cite{ZHANG2,Rong9,Piazzoni,FengShuo}. However, this method can be overly idealized and may lead to the results that deviate significantly from the real-life situations. For example, the simulators and sensor modeling are often simplified in order to ensure computation efficiency \cite{Rong9,9151398,9176787}.  To enhance the fidelity of simulations while maintaining low computational costs, a promising alternative approach, called perception error model (PEM), has  recently been proposed \cite{ijcai}. The PEM incorporates sensing and perception pipeline into a single module, therefore circumventing the need for high-precision sensor modeling. Interestingly, the goal of PEMs is not to seek a better modeling for the input-output mapping, but rather to explicitly model the underlying AI-induced uncertainties. It is shown in \cite{Piazzoni,mitra2018towards} that PEMs can serve as efficient and effective  alternatives to replacing sensing and perception systems. These models are then utilized to study the impact of perception errors on safety of AVs \cite{Piazzoni,berkhahn2021traffic}. Similarly, PEMs are used to estimate the failure probabilities in an automated emergency braking system \cite{Innes}, and the authors also argue that the photorealistic, high-fidelity simulations may not be essential for evaluating the ADS safety, as long as the key characteristics and the probabilistic distributions of perception errors are sufficiently represented.}

{The aforementioned works mainly focus on generating simulated driving scenarios that better replicate the real-world situations, in which the properties of AI-based autonomous vehicles can be tested. With the aid of digital testbeds, various automated driving polices are able to be explored. To achieve desired driving tasks, methods such as rule-based approaches and intelligent driver models (IDMs) have been widely utilized in industry, owing to their effectiveness and simplicity \cite{Albeaik,CHEN,xiao2021rule,aksjonov2021rule}. An iterative linear quadratic regulator is designed to solve nonlinear motion planning with constraints\cite{Chen2019}. More recently, learning-based approaches have gained popularity for their adaptability to unknown and unseen situations \cite{liu2021decision}. In these studies, while the uncertainty on environments can be considered, they usually assume ideal perception. However, as highlighted earlier, modern AI-based perception systems frequently suffer from errors and even certain sort of faulty outputs (e.g., misdetection, misclassification, occlusion, etc.). Ignoring these issues in design can lead to unsafe and risky driving behavior in practice \cite{pandharipande,Liu,ZhangZheyu}. Therefore, designing driving policies that are aware of erroneous perception is of significant importance. In \cite{artunedo2020motion},  the uncertainty in localization is considered in the stage of motion planning. An estimation strategy is proposed to compensate for perceptual errors and leads to efficient regulation for connected and automated traffic \cite{li2021cooperative}. In \cite{bonzanini2021}, perception-aware model predictive control is studied to improve robustness to noisy sensor inputs. Besides Gaussian noise, the effect of intermittent packet dropout over communication networks is addressed in the field of vehicular platooning\cite{zhao2020stability,acciani2018cooperative,gordon2021comparison}. Kalman filtering based techniques are widely utilized in such cases against the intermittent observations \cite{liu2016stochastic,villenas2023kalman}. 
}

While extensive efforts has been made to validate the safety and performance of AI-based autonomous vehicles, the results are largely based on synthetic digital simulations \cite{ijcai,mitra2018towards,Piazzoni,berkhahn2021traffic,Innes}. Due to the numerical nature of those approaches, it is impossible to cover all potential driving scenarios and, thus, makes the evaluation inherently restrictive. This motivates the need for a formal framework that enables to study and reveal underlying interactions analytically between performance of AI-driven perception systems and autonomous driving. In terms of driving policy design, earlier works can manage either noisy measurements or intermittent communications independently \cite{artunedo2020motion,bonzanini2021,zhao2020stability,acciani2018cooperative}, yet limited attention has been given to the coupled effects of both factors and, indeed, there remains a gap to address AI-based perception in autonomous vehicles. In addition, most of these schemes primarily concentrate on algorithmic aspects and lack rigorous discussions on resultant vehicle's closed-loop properties.

To bridge this gap, this paper aims to provide a formal framework, by which the closed-loop properties of AI-based autonomous vehicles can be analytically established. The proposed framework enables also control synthesis that explicitly accounts for those effects caused by the AI-driven perception systems and ensures desired behavior across all scenarios. It is worth noting that the closed-loop stability and performance are fundamental to the safety of autonomous vehicles. Stability guarantees that the vehicle remains controllable and behaves robustly to various errors and disturbances. Performance (e.g., response speed and steady-state errors) means that the AV can accurately follow desired trajectories, maintain safe distances, and avoid collisions. Without a stable and well-performing closed-loop system, the safety of AI-based autonomous vehicles cannot be assured. The contributions are threefold:
{\begin{itemize}
    \item To address the challenge of representing AI-based perception systems which normally are formed by deep neural networks\cite{PREETI2024124664}, a novel modeling means is introduced based on perception error models (PEMs); rather than focusing on input-output mappings, this approach emphasizes the description of AI-induced perception errors \cite{ijcai,Piazzoni,Innes,berkhahn2021traffic}. Specifically, two typical AI-induced error patterns, that is, misdetection and measurement noise, are particularly concentrated and modeled by Markov chains and Gaussian processes, respectively, as these two have been demonstrated to have significant impacts on behavior of AI-based vehicles\cite{ijcai,Piazzoni,mitra2018towards}.  With this modeling, a PEM-augmented driving model (PEM-ADM) is proposed to describe the error behavior of certain autonomous driving systems. It is important to note that unlike traditional state-space models, PEM-ADM incorporates a heterogeneous source of uncertainty in its measurement equations.
    \item Based on the PEM-ADM, the effects caused by the AI-based perception are formally investigated. More precisely, the closed-loop properties of AI-based autonomous vehicles, such as stability and steady-state accuracy, are analyzed. A sufficient condition is developed to establish the stochastic stability of the closed-loop in the mean square sense. Moreover, an upper bound on the steady-state error is provided,  
    and this offers insights into  how imperfect perception and controls could affect overall driving performance.
    \item Two efficient output feedback control synthesis methods are proposed. One allows to design controllers that ensure stochastic stability, even though the measurement information is concurrently affected by Markovian misdetection and Gaussian noise. Furthermore, the second control synthesis approach not only guarantees closed-loop stability, but also allows for specification of desired driving performance, for example, faster convergence and improved steady-state accuracy. The   presented performance assured method is formulated as a constrained convex optimization problem with linear matrix inequalities. Therefore, it can be efficiently solved off-line by existing numerical solvers. The developed results are validated through an adaptive cruise control case study, demonstrating their effectiveness and robustness under corrupted  and misleading perception.
\end{itemize}}

The rest of the paper is organized as follows: Section~\ref{sec2} presents the modeling for AI-based AV systems and states the problem. The main results on control analysis and synthesis are developed in Section~\ref{sec3}. Results are then applied to an adaptive cruise control problem in Section~\ref{sec4}. Section~\ref{sec5} provides illustrative examples, and Section~\ref{sec6}  concludes the paper.

\textit{Notation:} Denote $d x$ as the differential of $x$, and $\dot{x}(t)$ the time derivative of $x$. The symbol $\mathbb{E}[\cdot]$ stands for the mathematical expectation of a random variable, and $\mathcal{L}$ stands for the infinitesimal operator of a stochastic process. Denote the transpose of a matrix $A$ by $A^T$. A symmetric matrix $A > 0$ means $A$ is a positive definite matrix and, conversely, $A < 0$ a negative definite matrix. $\lambda_{max}(A)$, $\lambda_{min}(A)$ denote the maximum and minimum eigenvalues of $A$, respectively. $A^{-1}$ denotes the inverse of $A$. The pseudo inverse is denoted by $A^+$. Function $\text{diag}(\cdot)$ defines a diagonal matrix. The trace of $A$ is denoted by $\text{tr}(A)$.

\begin{figure}[!tbp]
\centering
\includegraphics[width=0.45\textwidth,trim=0.85cm 0.3cm 0.9cm 0.3cm, clip]{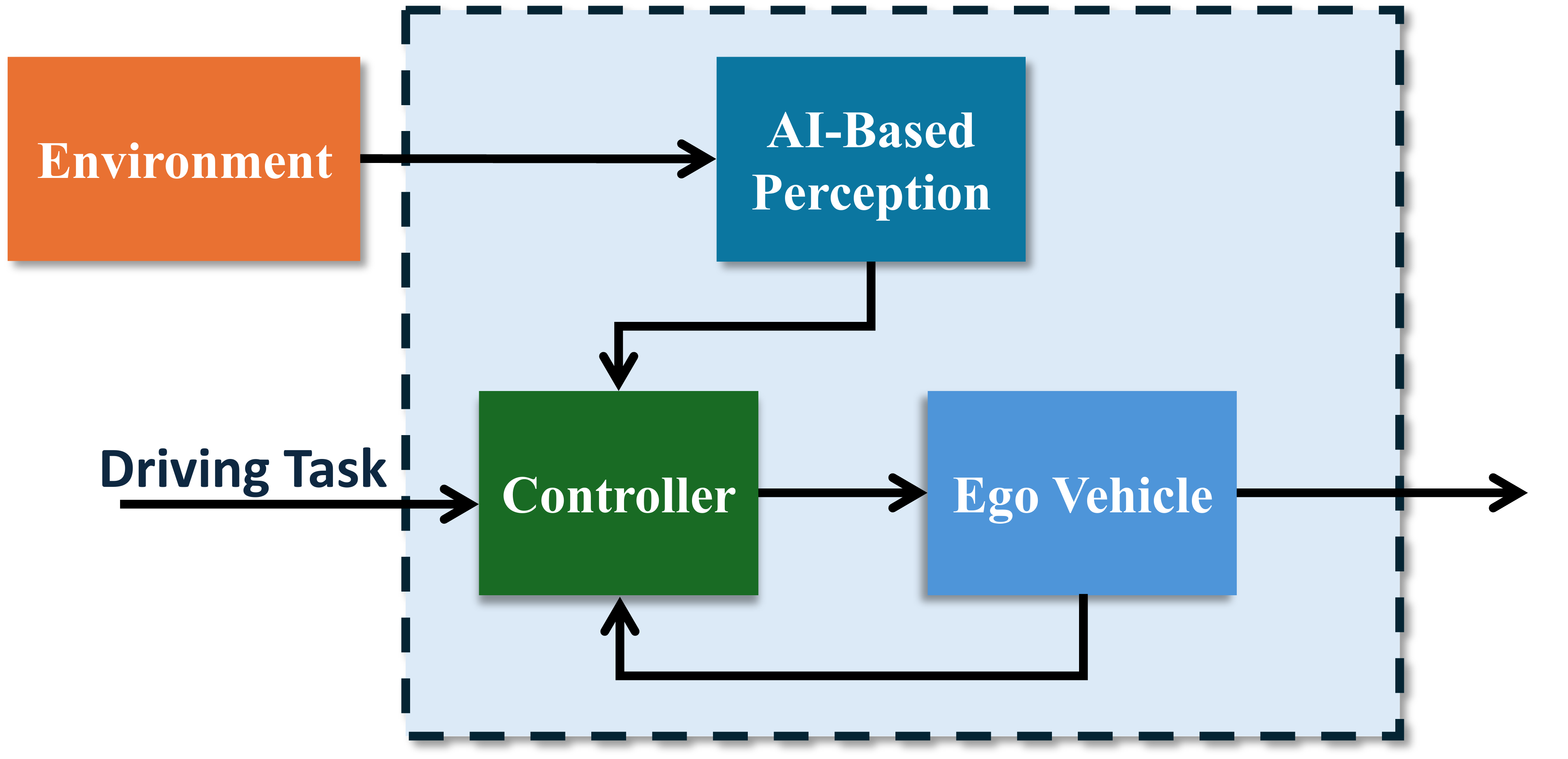}
\caption{Schematic of signal flows within an ADS.}
\label{fig:01}
\end{figure}
\section{modeling and problem statement} \label{sec2}
\subsection{Perception Error Model}
In practice, the AI-based perception process typically comprises two components, that is, sensing and perception (S\&P) systems. The outputs of the S\&P are subsequently fed into the control modules to make informed decisions, as illustrated in Fig. \ref{fig:01}. However,  due to the complicated mechanism of sensing and AI perception algorithms, it remains a significant challenge to incorporate S\&P systems into closed-loop analysis for autonomous driving tasks.

To overcome this difficulty, instead of looking for the modeling for S\&P input-output, PEMs focus on characterizing the statistical properties of AI-induced perception errors \cite{ijcai,Piazzoni,Innes,berkhahn2021traffic}. More specifically, the outputs of the AI-based perception systems are represented by 
\begin{align}
    PEM(\mathcal{W}) = \mathcal{W}+ \mathcal{\varepsilon} \label{eq:1}
\end{align}
where $ \mathcal{W}$ denotes the actual world or the environment where the ego vehicle operates in, and $\varepsilon$ captures the underlying errors and uncertainties caused by the AI-based perception process. In this formulation, the sole need is to represent error characteristics~$\varepsilon$ of perception instead of the whole S\&P input-output.

{This paper will particularly focus on two fundamental classes of AI-induced perception errors, namely, misdetection and measurement noise. These two factors have been demonstrated to have major impacts on the safety and performance of current AI-embedded AV systems \cite{ijcai,Piazzoni,mitra2018towards}. In the following, we will present the modeling for AI-driven automated driving systems.}

\subsection{PEM-Augmented Driving Model}
{As illustrated in Fig.~\ref{fig:01}, many autonomous driving tasks require interactions with environments whose dynamics can be represented as
\begin{align}
    \dot{x}^{env} = f_1(x^{env},w^{env}) \label{eq:2:r1}
\end{align}
where $f_1(\cdot,\cdot)$ governs the evolution of the environment; $x^{env}$ denotes the state of the environment (e.g., location, velocities, or accelerations); $w^{env}$ represents the uncertainty applied on the environment.}

The motion of the ego vehicle is described by
\begin{align}
    \dot{x}^{ego} = f_2(x^{ego},u^{ego},w^{ego}) \label{eq:3:r1}
\end{align}
where $f_2(\cdot,\cdot,\cdot)$ denotes the dynamics; $x^{ego}$, $u^{ego}$ represent the state and control actions of the ego vehicle, respectively, and $w^{ego}$ signifies the enforced uncertainty.

The goal of ADSs is usually to drive the ego vehicle to maintain certain synchronization with the environments (e.g., car following, lane changing and keeping) under a well-structured driving policy. To facilitate analysis and synthesis for an AI-driven ADS, we propose the following PEM-augmented driving model (PEM-ADM):
\begin{align}
    \dot{x}(t) &= A(i) x(t) + B(i) u(t), \label{eq_1_} \\
    y(t) &= C(i) x(t) + D(i) \omega (t), \nonumber \\
    i &= r(t), ~ i \in \mathcal{I} = \{1, \cdots, N\} \label{eq_1}
\end{align}
where $A(i) \in \mathbb{R}^{n\times n}$, $B(i) \in \mathbb{R}^{n\times m}$, $C(i) \in \mathbb{R}^{p\times n}$, and $D(i) \in \mathbb{R}^{p\times m}$ are the task-related system matrices yet subject to random switching, as indexed by $i$, to account for the potential faults and misdetection; in particular, $i$ is orchestrated by $r(t)$ satisfying a continuous-time Markov chain  \cite{boukas2007stochastic}, defined by a generator matrix $Q= [q_{ij}] \in \mathbb{R}^{N\times N}$. $x(t)\in \mathbb{R}^{n}$ represents the error state of an ADS, which usually captures the deviation between some dimensions of $x^{env}(t)$ and $x^{ego}(t)$, and the driving input is denoted by $u(t) \in \mathbb{R}^{m}$ often equal to $u^{ego}(t)$. The measurements $y(t) \in \mathbb{R}^{p}$ used for control are the erratic version of $x(t)$ due to the effects of AI-based perception; $\omega(t)\in \mathbb{R}^{p}$ denotes the perception or measurement noise independent of $t$, each entry of which satisfies Gaussian distribution $\mathcal{N}(0,1)$ and mutually independent as well.
\newtheorem{remark}{Remark}

{The PEM-ADM is a control-oriented model for autonomous driving and consists of two main components: dynamic model~\eqref{eq_1_} describes the error behavior of the ADS, and AI-based measurement equation \eqref{eq_1} follows a PEM formulation~\eqref{eq:1}. It is important to note that the actual error evolution may not adhere to a linear representation. As indicated in \eqref{eq:2:r1} and \eqref{eq:3:r1}, the dynamics are usually nonlinear. However, numerous effective handling methods from robust and nonlinear control literature can be used to address this issue. For instance, estimators can be designed to approximate and counteract the nonlinear dynamics, rendering the remaining system linear. Alternatively, the proposed model can serve as a motion planner; that is, the outputs from our model could be interpreted as reference trajectories for lower-level controllers to track. The high-fidelity nonlinear dynamics can then be incorporated at this lower level.  Thus, assuming linear model~\eqref{eq_1_} is not overly restrictive. Our focus, on the other hand, is on handling the effects caused by the AI-based perception. As shown in \eqref{eq_1}, the resulting measurements are subject to Markov switching and Gaussian noise concurrently. The Markov chain determines which measurement mode is active at each sampling time among a given finite set; for example, in the simplest case, whether a measurement is available or not. Meanwhile, the Gaussian process injects noise into each measurement independently over time. These two features complicate the ADS analysis and synthesis and remain underexplored.}



The following concept of the stochastic stability will be dealt with throughout the paper.

\begin{definition}
    The trivial solution $x(t)  \equiv 0$ of a stochastic system is said to be ultimately bounded stable (UBS) in mean square if there exists $M > 0$ such that
    \begin{align}
        \mathbb{E} [x(t)^Tx(t)] \le M \quad \text{as} \quad t \to \infty \label{eq_4}
    \end{align}
\end{definition}

\begin{figure}[!tbp]
\centering
\includegraphics[width=0.45\textwidth,trim=0cm 0.3cm 0.9cm 0.3cm, clip]{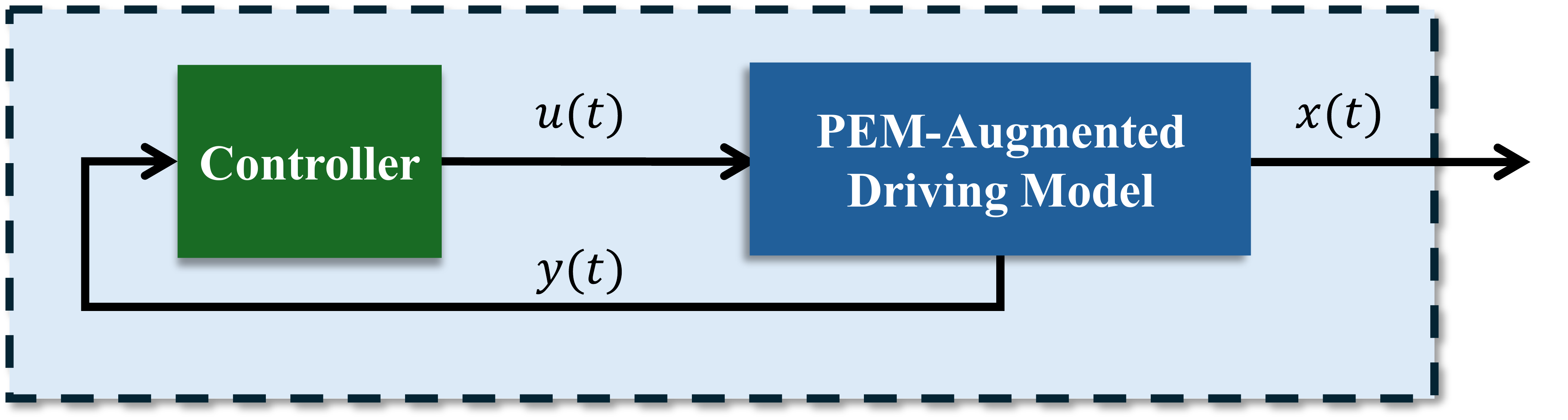}
\caption{Control block diagram for a PEM-ADM.}
\label{fig:02}
\end{figure}

 As illustrated in Fig.~\ref{fig:02},  the first goal of the article is intended to establish the closed-loop stochastic stability of the PEM-ADM with respect to a given output feedback controller that treats the AI-based measurement  $y(t)$ as the input. Furthermore, our second goal is devoted to the controller design, namely, offering a systematic procedure to find an output feedback controller for $u(t)$ that stochastically stabilizes the PEM-ADM system meanwhile with some guaranteed performance, such as fast convergence and high driving accuracy.

\section{Control Analysis and Synthesis} \label{sec3}
To achieve the above objectives, here it is focused on linear output feedback control as
\begin{align}
    u(t) = K(i) y(t) \label{eq_5}
\end{align}
where $K(i) \in \mathbb{R}^{m \times p}$ are given control gains for each mode $i$ ($i = 1,\cdots,N$).

\subsection{Closed-loop Stability Analysis}
This subsection will develop a sufficient condition under which the stochastically jump and uncertain system \eqref{eq_1} can be stabilized with the presented linear control structure \eqref{eq_5}. 

For this purpose, plugging in the output feedback control law \eqref{eq_5} into \eqref{eq_1} gives the resulting closed-loop: 
\begin{align}
    \dot{x}(t) = [ A(i)+ B(i)K(i) C(i) ]x(t) + B(i) K(i)  D(i)\omega(t) \label{eq:4}
\end{align}
It is convenient to rewrite \eqref{eq:4} into the stochastic differential form
\begin{align}
    d {x} (t) = [ A(i)+ B(i)K(i) C(i) ]x(t) dt + B(i) K(i)  D(i) dw(t) \label{eq_6}
\end{align}
where $dw(t) = \omega(t) dt$. It is simple to verify that $w(t)$ is the Wiener process, for which tools like Ito calculus are well-established.

Denote $A_{cl}(i) =  A(i) + B(i)K(i)C(i)$, $W(i) = B(i) K(i)  D(i)$. Then \eqref{eq_6} can be written as
\begin{align}
    d {x} (t) = A_{cl}(i) x(t) dt + W(i) dw(t)  \label{eq_7}
\end{align}

The following proposition summarizes the result of the closed-loop stability.
\newtheorem{proposition}{Proposition}
\begin{proposition} \label{thm1}
    The origin of the closed-loop system \eqref{eq_7} is UBS in mean square if there exist $P(i) >0$ such that the following matrix inequality holds
    \begin{align}
         A_{cl}(i)^T P(i) + P(i) A_{cl}(i) + \sum _{j=1}^{N} q _{ij} P(j) < 0, ~ \forall i \in \mathcal{I} \label{eq_8}
    \end{align}
\end{proposition}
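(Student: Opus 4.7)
The plan is to build a mode-dependent quadratic Lyapunov function $V(x,i) = x^T P(i) x$ using the positive-definite matrices $P(i)$ guaranteed by \eqref{eq_8}, and then apply Ito's formula to the joint jump-diffusion process $(x(t), r(t))$ of \eqref{eq_7}. The resulting infinitesimal generator $\mathcal{L} V$ will naturally contain three pieces: a drift contribution from $A_{cl}(i) x$, a diffusion correction from the Wiener input $W(i)\,dw(t)$, and a generator-type term from the continuous-time Markov chain indexed by $Q = [q_{ij}]$.

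First I would write out the infinitesimal generator explicitly. Computing term by term gives
\begin{equation*}
\mathcal{L} V(x,i) = x^T \Big[ A_{cl}(i)^T P(i) + P(i) A_{cl}(i) + \sum_{j=1}^{N} q_{ij} P(j) \Big] x + \text{tr}\big( W(i)^T P(i) W(i) \big).
\end{equation*}
By hypothesis \eqref{eq_8}, the bracketed matrix is negative definite for every $i \in \mathcal{I}$. Because $|\mathcal{I}| = N$ is finite, I can extract uniform constants $\alpha := \min_{i} \lambda_{min}\big({-[A_{cl}(i)^T P(i) + P(i) A_{cl}(i) + \sum_{j} q_{ij} P(j)]}\big) > 0$, $\beta := \max_{i} \text{tr}\big(W(i)^T P(i) W(i)\big) < \infty$, $p_{max} := \max_{i} \lambda_{max}(P(i))$, and $p_{min} := \min_{i} \lambda_{min}(P(i)) > 0$.

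Next, combining $x^T x \ge V(x,i)/p_{max}$ with the expression above yields the pointwise bound $\mathcal{L} V \le -(\alpha/p_{max})\,V + \beta$. Applying Dynkin's formula to $V$ along trajectories of \eqref{eq_7} and taking expectations, I obtain $\tfrac{d}{dt}\mathbb{E}[V(x(t), r(t))] \le -(\alpha/p_{max})\,\mathbb{E}[V] + \beta$, so the scalar comparison lemma produces $\mathbb{E}[V(x(t), r(t))] \le e^{-\alpha t / p_{max}} V(x(0), r(0)) + (\beta p_{max}/\alpha)(1 - e^{-\alpha t / p_{max}})$. Using $\mathbb{E}[x(t)^T x(t)] \le \mathbb{E}[V]/p_{min}$ and sending $t \to \infty$ delivers the ultimate bound $M = \beta p_{max} / (\alpha p_{min})$, which matches Definition 1 exactly and thus gives UBS in mean square.

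The main obstacle is not the comparison-style estimate, which is standard, but correctly establishing the form of $\mathcal{L} V$ for the combined jump-diffusion process. The $\sum_j q_{ij} P(j)$ piece arising from the Markov chain and the $\text{tr}(W^T P W)$ piece arising from the Wiener process must appear simultaneously and be added to the usual deterministic drift term. I would rely on the version of Ito's formula for Markov-modulated stochastic differential equations documented in \cite{boukas2007stochastic} (already cited in the paper) to justify this step rigorously; once that is in place, the rest of the argument reduces to the scalar linear differential inequality above.
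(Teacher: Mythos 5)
Your proposal is correct and follows essentially the same route as the paper's proof: the same mode-dependent Lyapunov function $V(x,i)=x^TP(i)x$, the same infinitesimal generator with the three contributions (drift, Markov generator term $\sum_j q_{ij}P(j)$, and diffusion correction $\mathrm{tr}(W(i)^TP(i)W(i))$), Dynkin's formula, and the same final bound $M=\beta p_{max}/(\alpha p_{min})$, which coincides with the paper's $\gamma_3 c_1/(\gamma_1\gamma_2)$ in \eqref{eq_16}. The only cosmetic difference is that you extract the ultimate bound via the scalar comparison lemma on $\mathbb{E}[V]$ (which also yields the exponential rate the paper only notes later in Remark~\ref{rmk:2}), whereas the paper's main proof uses the ``derivative nonpositive outside a sublevel set'' argument in \eqref{eq_13}--\eqref{eq_16}.
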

\begin{proof}
    To help derive the result, a Lyapunov-like approach is adopted, and the Lyapunov function candidate is proposed as
    \begin{align}
        V(x(t),r(t)) = x(t) ^T P(i) x(t), ~ \text{when} ~ r(t) =i \label{eq_9}
    \end{align}
    where $P(i)$ is a symmetric positive definite matrix. Employing the infinitesimal operator on $V(x(t),r(t))$ together with \eqref{eq_7} yields
    \begin{align}
        \mathcal{L} V(x(t),r(t)) &= x(t)^TA_{cl}(i) ^ T V_x (x(t),i) + \sum _{j=1}^{N} q _{ij} V(x(t),j) \nonumber \\
        &~ + \frac{1}{2} \mathrm{tr}[W(i) ^T V_{xx} (x(t),i) W(i)] \nonumber \\
        &= x(t)^T [ A_{cl}(i)^T P(i) + P(i) A_{cl}(i) \nonumber \\
        &~ + \sum _{j=1}^{N} q _{ij} P(j)] x(t) + \mathrm{tr}[ W(i) ^T P(i) W(i) ] \label{eq_10}
    \end{align}
    The derivation for the first line of \eqref{eq_10} can be referred to  Appendix.  Let $Q(i) = A_{cl}(i)^T P(i) + P(i) A_{cl}(i) + \sum _{j=1}^{N} q _{ij} P(j)$, $\chi(i) = \mathrm{tr}[ W(i) ^T P(i) W(i) ]$ and denote $\lambda _{max}(Q(i)) $ as the minimum eigenvalue of $Q(i)$. Using the condition \eqref{eq_8}, for every $i \in \mathcal{I}$, \eqref{eq_10} can be bounded by
    \begin{align}
        \mathcal{L} V(x(t),r(t)) & \le \max_{i \in \mathcal{I}} \{\lambda _{max}(Q(i))\} x^T(t) x(t) \nonumber \\
        &~ + \max_{i \in \mathcal{I}} \{ \chi(i) \} \label{eq_11}
    \end{align}
    By Dynkin's formula and Fubini's Theorem, we see that
    \begin{align}
        \mathbb{E}[V(x(t),r(t))] - V(x_0,r_0) &= \mathbb{E}[\int_{t_0}^t \mathcal{L} V(x(s),r(s)) ds] \nonumber \\
        & = \int_{t_0}^t \mathbb{E}[\mathcal{L} V(x(s),r(s))] ds \label{eq_12}
    \end{align}
    Let  $\gamma _1 = - \max_{i \in \mathcal{I}} \{\lambda _{max}(Q(i))\}$; obviously, $\gamma _1 >0$ since $Q(i) < 0$ as per \eqref{eq_8}, and set $c_1 =  \max_{i \in \mathcal{I}} \{ \chi(i) \}$. Then differentiating \eqref{eq_12} and using \eqref{eq_11}, we get
    \begin{align}
        \frac{d}{dt} \mathbb{E}[V(x(t),r(t))] &=\mathbb{E}[\mathcal{L} V(x(t),r(t)) ] \nonumber \\
        & \le -\gamma _1 \mathbb{E}[x(t)^Tx(t)] +c_1 \nonumber \\
        & \le 0, ~\text{whenever} ~ \mathbb{E}[x(t)^T x(t)] \ge c_1/\gamma_1 \label{eq_13}
    \end{align} 
    On the other hand, by \eqref{eq_9} the following holds 
    \begin{align}
        \gamma_2 x(t) ^T x(t)  \le  V(x(t),r(t) ) \le \gamma_3 x(t) ^Tx(t) \label{eq_14}
    \end{align}
    where $\gamma_2 = \min _{i \in \mathcal{I}}\{\lambda_{min}(P(i))\}$ and $\gamma_3 = \max_{i \in \mathcal{I}} \{ \lambda_{max}(P(i)) \}$.
    
    Based on \eqref{eq_13}, together with \eqref{eq_14}, it is easy to see that for every $i \in \mathcal{I}$,
    \begin{align}
        \mathbb{E}[V(x(t),r(t))] &\le \frac{\gamma_3 c_1 }{\gamma_1} \label{eq_15}
    \end{align}
    Applying \eqref{eq_14} again yields
    \begin{align}
        \mathbb{E}[x(t)^T x(t)] & \le \frac{\gamma_3 c_1}{\gamma_1 \gamma_2}, \label{eq_16}
    \end{align}
    which shows that the origin of the system is UBS in mean square. The proof is complete.
\end{proof}

\begin{remark}
    In the above proof, except for the analysis of stochastic stability for a given controller, a lower bound on closed-loop steady state performance is derived as given in \eqref{eq_16}, which can be exploited in below for a performance guaranteed control design.
\end{remark}

\begin{remark} \label{rmk:2}
    By \eqref{eq_14} it is easy to see that $\mathbb{E}[x(t)^T x(t)]  \ge 1/\gamma _3 \mathbb{E}[V(x(t),r(t))]$. Therefore, the following is true in accordance with \eqref{eq_13}:
    \begin{align}
        \frac{d}{dt} \mathbb{E}[V(x(t),r(t))] \le - \frac{\gamma_1}{\gamma _3} \mathbb{E}[V(x(t),r(t))] + c_1
    \end{align}
    Hence, it can be estimated that the rate of the convergence is no smaller than the value of $\gamma_1/\gamma_3$ before system states enter the region as described by \eqref{eq_15}.
\end{remark}

\begin{remark} \label{rmk:3}
    An upper bound on $ \mathbb{E}[x(t)^T x(t) ] $ over the entire process can be obtained as well. This kind of bound is useful for safe critical tasks, in which the system states are required to restrict to some prescribed domain for all time. To see that, applying again \eqref{eq_13} and \eqref{eq_14}, it yields that whenever $\mathbb{E}[x(t)^T x(t)] \ge c_1/\gamma_1$, we have
    \begin{align}
         \gamma _2  \mathbb{E}[x(t)^T  x(t) ] \le \mathbb{E}[V(x(t),r(t))] \le V(x_0,r_0) \le \gamma _3 x_0^T x_0,
    \end{align}
    and therefore,
    \begin{align}
        \mathbb{E}[x(t)^T  x(t) ] \le \max \{\frac{\gamma _3}{\gamma _2} x_0^T x_0, \frac{\gamma_3 c_1}{\gamma_1 \gamma_2} \},~ \forall t
    \end{align}
\end{remark}

\subsection{Stochastic Stabilizing Control}
Proposition~\ref{thm1} gives a sufficient condition for the system~\eqref{eq_1} to be stochastically stabilizable with the given controller~\eqref{eq_5}. However, it would be of practical interest to not only analyze stability for a given control system but design a stochastic stabilizing controller (SSC). Such a SSC design issue shall be addressed in this subsection. 

With the definition of $A_{cl}(i)$, the stability condition \eqref{eq_8} can be written as follows: for each $i \in \mathcal{I}$
\begin{align}
     [A(i)+B(i)&K(i)C(i)]^T P(i) + P(i)  [A(i)+B(i)K(i)C(i)] \nonumber  \\
     &+  \sum _{j=1}^{N} q _{ij} P(j)< 0  \label{eq_17}
\end{align}

The goal of the SSC design now can be stated as: find $P(i) >0$ and $K(i)$ such that the matrix inequality \eqref{eq_17} holds. Unfortunately, it is not really easy to solve the feasibility problem of \eqref{eq_17} due to the coupling of $K(i)$ and $P(i)$. To overcome this difficulty, a convex relaxation is introduced in below.
    
Let $S(i) = P(i)^{-1}$ and pre- and post-multiply by $S(i)$, leading to an equivalent condition as 
\begin{align}
    & S(i) [A(i)+B(i)K(i)C(i)]^T + [A(i)+B(i)K(i)C(i)] S(i) \nonumber \\
    & \qquad \qquad + S(i) \left[ \sum _{j=1}^{N} q _{ij} S(j)^{-1} \right] S(i) < 0, \label{eq_18}
\end{align}
and therefore,
\begin{align}
    & S(i)A(i)^T + S(i) C(i)^TK(i)^T B(i)^T + B(i)K(i)C(i)S(i)   \nonumber \\
    & \qquad + A(i)S(i)  + S(i) \left[ \sum _{j=1}^{N} q _{ij} S(j)^{-1} \right] S(i) < 0 \label{eq_19}
\end{align}

Define $\Lambda_i(S)$ and $\Xi_i(S) $ as follows
\begin{align}
    \Lambda_i(S) &= [\sqrt{q_{i1}} S(i), \cdots, \sqrt{q_{ii-1}} S(i), \sqrt{q_{ii+1}} S(i)  
 \\
 & \quad \quad \cdots, \sqrt{q_{iN}} S(i)] \nonumber \\
 \Xi_i(S) &= \mathrm{diag}\left(S(1), \cdots, S(i-1), S(i+1), \cdots, S(N)\right)
\end{align}
Then the term $S(i) \left[ \sum _{j=1}^{N} q _{ij} S(j)^{-1} \right] S(i)$ can be rewritten as 
\begin{align}
    S(i) \left[ \sum _{j=1}^{N} q _{ij} S(j)^{-1} \right] S(i) = q_{ii} S(i) + \Lambda_i(S)\Xi_i^{-1}(S)  \Lambda_i(S)^T
\end{align}

Letting $F(i) = K(i)Y(i) $ with $C(i)S(i)=Y(i)C(i)$ and using $\Lambda_i(S)$ and $\Xi_i(S)$, \eqref{eq_19} can be rewritten as
\begin{align}
     &S(i)A(i)^T +  C(i)^T F(i)^T B(i)^T + A(i)S(i) + B(i)F(i)C(i) \nonumber \\
     & \quad \quad q_{ii} S(i) + \Lambda_i(S)\Xi_i^{-1}(S)  \Lambda_i(S)^T < 0 \label{eq_23}
\end{align}
Applying Schur complement, \eqref{eq_23} is equivalent to the following linear matrix inequalities (LMIs):
\begin{align}
    &\begin{bmatrix}
\Delta(i)  & \Lambda_i(S)\\ 
 \Lambda_i(S)^T & -\Xi _i(S)
\end{bmatrix} <0, \nonumber \\
& C(i)S(i)-Y(i)C(i) = 0 \label{eq_24}
\end{align}
where $\Delta(i) = S(i)A(i)^T +  C(i)^T F(i)^T B(i)^T + A(i)S(i) + B(i)F(i)C(i)+ q_{ii} S(i) $.

The above SSC design procedure can be summarized as the following proposition.
\begin{proposition} \label{thm2}
    A Markovian jump system as described in~\eqref{eq_1} can be made UBS in mean square at the origin with the control law~\eqref{eq_5} if there exist matrices $S(i) > 0$, $Y(i) >0$, and $F(i)$ ($i \in \mathcal{I}$) such that LMIs~\eqref{eq_24} hold. In particular, the corresponding SSC gain to be determined can be recovered as $K(i) = F(i)Y^{-1}(i)$.
\end{proposition}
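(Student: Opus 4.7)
The plan is to derive the LMI condition~\eqref{eq_24} as an equivalent (actually sufficient) reformulation of the Lyapunov inequality~\eqref{eq_17} from Proposition~\ref{thm1}, so that feasibility of~\eqref{eq_24} implies stochastic stabilizability. The main obstacle is that~\eqref{eq_17} is simultaneously nonlinear in two distinct ways: it is bilinear in $K(i)$ and $P(i)$ through the term $P(i)B(i)K(i)C(i)$, and it is also nonlinear in the set of $P(j)$'s once we try to invert them, because the rate sum $\sum_j q_{ij}P(j)$ sits together with products $P(i)(\cdot)P(i)$ after a congruence transform. I would address these two sources of nonlinearity in separate steps.

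First, I would perform the standard change of variable $S(i) = P(i)^{-1}$ and pre- and post-multiply~\eqref{eq_17} by $S(i)$. This produces~\eqref{eq_18} and exposes the problematic quadratic-in-$S$ term $S(i)\bigl[\sum_{j=1}^{N} q_{ij} S(j)^{-1}\bigr] S(i)$. I would split this sum at $j=i$ to isolate the linear piece $q_{ii}S(i)$ (which is linear in $S(i)$ and safe), and collect the remaining cross terms. Using the definitions of $\Lambda_i(S)$ and $\Xi_i(S)$ given in the excerpt, the cross sum rewrites as $\Lambda_i(S)\Xi_i^{-1}(S)\Lambda_i(S)^T$. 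A Schur complement on this quadratic-in-$S$ block then trades it for the off-diagonal entries $\Lambda_i(S)$ and the $-\Xi_i(S)$ block in~\eqref{eq_24}, removing the nonlinearity in $S$.

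Second, I would deal with the bilinearity between $K(i)$ and $S(i)$ carried by $B(i)K(i)C(i)S(i)$. Because $C(i)$ need not commute with $S(i)$, the usual trick of defining $F(i)=K(i)S(i)$ fails. Instead, I would impose the equality constraint $C(i)S(i)=Y(i)C(i)$ for some $Y(i)>0$, which forces $S(i)$ to act on $C(i)$ as a left multiplier $Y(i)$. Under this constraint, $B(i)K(i)C(i)S(i)=B(i)K(i)Y(i)C(i)$, so the new variable $F(i):=K(i)Y(i)$ makes the expression linear in the decision variables. The resulting condition is precisely~\eqref{eq_23}, and the preceding Schur complement then gives the matrix block of~\eqref{eq_24}.

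Finally, feasibility of~\eqref{eq_24} returns~\eqref{eq_17}, which by Proposition~\ref{thm1} guarantees UBS in mean square for the closed-loop. Recovery of the controller is $K(i)=F(i)Y^{-1}(i)$, well-defined since $Y(i)>0$ is invertible. The hard part is really the equality constraint $C(i)S(i)=Y(i)C(i)$: it is the mechanism that linearises the output-feedback bilinearity, but it is a strong structural requirement whose feasibility typically needs $C(i)$ to have full row rank; I would note this condition but not dwell on it, as it is the standard price paid for convexifying static output-feedback synthesis with a common-factor change of variables.
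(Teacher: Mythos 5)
Your proposal is correct and follows essentially the same route as the paper: the congruence transform $S(i)=P(i)^{-1}$, the split of $\sum_j q_{ij}S(j)^{-1}$ into $q_{ii}S(i)$ plus the $\Lambda_i(S)\Xi_i^{-1}(S)\Lambda_i(S)^T$ block handled by a Schur complement, the structural constraint $C(i)S(i)=Y(i)C(i)$ with $F(i)=K(i)Y(i)$ to linearise the output-feedback term, and the final appeal to Proposition~\ref{thm1} once feasibility of~\eqref{eq_24} returns~\eqref{eq_17}. Your added caveat about the rank condition needed for the equality constraint is a fair observation but does not change the argument.
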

\begin{proof}
    The proof is now straightforward, since the feasibility of \eqref{eq_24} implies the feasibility of \eqref{eq_17}. Therefore, by Proposition~\ref{thm1}, it yields the desired result. This completes the proof.
\end{proof}
\begin{remark}
    {In Proposition~\ref{thm2}, the SSC design problem is formulated as a feasibility check of a system of linear matrix inequalities \eqref{eq_17}. These LMI-based conditions are favored in modern control design, as they recast the control design to a semidefinite programming problem that can be efficiently solved using existing numerical algorithms. Besides, this approach provides a systematic way to design control for systems of arbitrary dimension. Additionally, the SSC design is carried out offline, thus  computationally efficient in practice.}
\end{remark}

\subsection{Performance Guaranteed Control}
The previous section provides an efficient way to synthesize SSC controllers for the system \eqref{eq_1} in terms of the feasibility issue of a set of LMIs, as presented in Proposition~\ref{thm2}. It is often quite relevant in reality to not only design controllers with a stability guarantee but also with some prescribed performance, such as a satisfied convergence rate and steady state accuracy. In the sequel, we will develop a novel approach to achieve performance guaranteed control (PGC) for the system~\eqref{eq_1}.

Recalling \eqref{eq_16}, it is noticed that the bound of the mean square error of the closed-loop relies on parameters $c_1$,  $\gamma _1$, $ \gamma_2$, and $ \gamma_3$, which can be viewed as a measure of the steady state performance for the system against the uncertain measurements. Unfortunately, it is usually intractable to directly optimize such a specification because of the non-convex nature of this optimization. To make it easier to solve, restrictions are imposed on some of the parameters, which gives rise to an upper bound for the original problem. While some conservativeness may be incurred, an efficient PGC synthesis method can be attained.

The constraints are put as follows for each $i \in \mathcal{I}$
\begin{align}
    & Q(i)   \le -\bar{\gamma}_1 P(i), \label{eq_26}\\
    & \lambda_{min}(P(i)) \ge \bar{\gamma}_2, \label{eq_27} \\
    & \lambda_{max}(P(i)) \le \bar{\gamma}_3, \label{eq_28} \\
    & \bar \gamma _3 = \alpha_1 \bar \gamma _2  \label{eq28}
\end{align}
where $\bar{\gamma}_1$, $\bar{\gamma}_2$,  $ \bar{\gamma}_3 >0$, and $\alpha_1 \ge 1$ are design parameters. As a result, an upper bound for \eqref{eq_16} can be obtained as 
\begin{align}
    \mathbb{E}[x(t)^T x(t)] & \le \frac{\gamma_3 c_1}{\gamma_1 \gamma_2}\le \frac{\alpha_1 c _1 }{\bar{\gamma}_1} \label{eq_31}
\end{align}

By the definition of $Q(i)$, \eqref{eq_26} can be written as
\begin{align}
     [A(i)+B(i)&K(i)C(i)]^T P(i) + P(i)  [A(i)+B(i)K(i)C(i)] \nonumber  \\
     &+  \sum _{j=1}^{N} q _{ij} P(j) \le -\bar{\gamma}_1 P(i)  \label{eq29}
\end{align}
It is easy to verify that the constraint \eqref{eq29} imposed is sufficient to imply the stability condition \eqref{eq_17}. Following the same math as presented in \eqref{eq_17}--\eqref{eq_23}, the ultimate LMIs are changed slightly to
\begin{align}
    &\begin{bmatrix}
\bar{\Delta}(i)  & \Lambda_i(S)\\ 
 \Lambda_i^{T}(S) & -\Xi _i(S)
\end{bmatrix} <0,\nonumber \\
& C(i)S(i)-Y(i)C(i) = 0 \label{eq30}
\end{align}
where $\bar{\Delta}(i) = S(i)A(i)^T +  C(i)^T F(i)^T B(i)^T + A(i)S(i) + B(i)F(i)C(i)+ q_{ii} S(i) + \bar\gamma_1 S(i)$.

Note also that in our LMI formulation, as seen in Proposition~\ref{thm2}, the decision variables are with respect to $S(i)$, $Y(i)$, and $F(i)$. Considering $S(i) = P(i)^{-1}$,  the constraints \eqref{eq_27} and \eqref{eq_28} therefore can be transformed into
\begin{align}
    & \lambda _{max} (S(i))  \le  \frac{1}{\bar{\gamma}_2}, \label{eq_29} \\
    &\lambda _{min} (S(i)) \ge  \frac{1}{\bar{\gamma}_3} \label{eq_30}
\end{align}

Due to the additional constraints enforced, the original optimization problem (i.e., minimizing \eqref{eq_16}) is now being reduced to minimize $c_1$ solely, and clearly, the resulting minimum is an upper bound for the original problem. The modified problem becomes
\begin{align}
    &\underset{K,S,Y,F}{\mathrm{minimize}} \quad c_1 = \max_{i \in \mathcal{I}} \{ \chi (i) \}\nonumber \\
    &\mathrm{subject ~ to} \quad \{ \eqref{eq30}-\eqref{eq_30}\} \nonumber\\
    & \qquad \qquad \quad S(i) > 0,~ Y(i) >0,  ~ i \in \mathcal{I} 
     \label{eq_34}
\end{align}

It is worth noting  that $K(i)$ and $P(i)$ are involved in the objective $c_1$, yet they are not the decision variables in our LMI formulation. Thus, some rearranging is required. By definition, we see that
\begin{align}
    \chi (i) &=  \mathrm{tr}( W(i) ^T P(i) W(i) ) \nonumber \\
    & = \mathrm{tr} \{ [B(i)K(i)D(i)]^T P(i)B(i)K(i)D(i) \} \label{eq_36}
\end{align}
Considering, together, the relation $C(i) S(i) -Y(i)C(i) = 0$, \eqref{eq_36} can be rewritten as
\begin{align}
    \chi (i) &=   \mathrm{tr} \{ [B(i) F(i)Y(i)^{-1} D(i)]^T P(i)  \nonumber \\
    & \qquad \quad B(i) F(i)Y(i)^{-1} D(i) \} \nonumber \\
    &=  \mathrm{tr}  \{ [B(i) F(i)C(i)S(i)^{-1}C(i)^{+} D(i)]^T P(i) \nonumber \\
    & \qquad \quad B(i) F(i)C(i)S(i)^{-1}C(i)^{+} D(i) \} \nonumber \\
    &=  \mathrm{tr}  \{ [B(i) F(i)P(i) D(i)]^T P(i) \nonumber \\
    & \qquad \quad B(i) F(i)P(i) D(i) \} \nonumber \\
    & \le  \bar{\gamma}_3 ^3 \mathrm{tr}  \{ [B(i) D(i)]^T B(i) D(i) F(i) ^T F(i) \} \label{eq_37}
\end{align}

Based on the upper bound \eqref{eq_37}, the performance guaranteed control synthesis can be formulated as the following convex constrained optimization:
\begin{align}
    &\underset{S,Y,F}{\mathrm{minimize}}  \quad \max_{i \in \mathcal{I}}  \mathrm{tr}  \{ [B(i) D(i)]^T B(i) D(i) F(i) ^T F(i) \} \nonumber \\
    &\mathrm{subject ~ to} \quad \{ \eqref{eq30}-\eqref{eq_30}\} \nonumber\\
    & \qquad \qquad \quad S(i) > 0, ~Y(i) >0, ~ i \in \mathcal{I} 
     \label{eq_38}
\end{align}
To make it more tractable, the above problem is equivalent to
\begin{align}
    &\underset{S,Y,F}{\mathrm{minimize}}  \quad {\gamma}_4 \nonumber \\
    &\mathrm{subject ~ to} \quad \{  \eqref{eq30}-\eqref{eq_30}\} \nonumber\\
    & \qquad \qquad \quad S(i) > 0, ~Y(i) >0,~ i \in \mathcal{I} \nonumber \\
    & \qquad \qquad \quad  \mathrm{tr}  \{ [B(i) D(i)]^T B(i) D(i) F(i) ^T F(i) \} \le {\gamma}_4,i \in \mathcal{I}
     \label{eq_39}
\end{align}
Denote ${\gamma}_4^*$ as the minimum for the problem \eqref{eq_39}. We state the following proposition to recap the PGC synthesis.
\begin{proposition} \label{thm:3}
     If the formulated PGC problem \eqref{eq_39} is feasible, then the system \eqref{eq_1} is UBS in mean square under the output feedback control law \eqref{eq_5} with the PGC gains $K^*(i) = F^*(i)Y^*(i)^{-1}$ that are recovered from the solution of the problem \eqref{eq_39}, and furthermore, the steady state accuracy can be guaranteed as
    \begin{align}
        \mathbb{E}[x(t)^T x(t)] & \le \frac{\alpha_1 \bar{\gamma}_3^3 {\gamma}_4^*}{\bar{\gamma}_1 }  \label{eq_41}
    \end{align}
\end{proposition}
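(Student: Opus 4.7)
The plan is to split the argument into the two assertions of the proposition, namely closed-loop UBS in mean square and the explicit steady-state bound \eqref{eq_41}. Both claims follow by chaining results already established in Propositions~\ref{thm1} and \ref{thm2} with the constraints imposed in \eqref{eq_26}--\eqref{eq28}, so no new stochastic-calculus machinery is required.

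First, for stability I would observe that the LMI \eqref{eq30} is, by exactly the Schur-complement manipulation that turned \eqref{eq_23} into \eqref{eq_24} in the proof of Proposition~\ref{thm2}, equivalent to the strengthened inequality \eqref{eq29} under the substitutions $S(i) = P(i)^{-1}$, $F(i) = K(i) Y(i)$ and the equality constraint $C(i)S(i) = Y(i)C(i)$. Because $\bar\gamma_1 > 0$ and $P(i) > 0$, the right-hand side of \eqref{eq29} is strictly negative, so \eqref{eq29} immediately implies the original stability inequality \eqref{eq_17}. Invoking Proposition~\ref{thm1} (with the recovered gain $K^*(i) = F^*(i) Y^*(i)^{-1}$) then yields UBS of the closed-loop \eqref{eq_7} in mean square.

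Second, for the steady-state accuracy I would revisit the inequality \eqref{eq_16} produced inside the proof of Proposition~\ref{thm1}, namely $\mathbb{E}[x(t)^T x(t)] \le \gamma_3 c_1/(\gamma_1 \gamma_2)$, where $\gamma_1 = -\max_i \lambda_{\max}(Q(i))$, $\gamma_2 = \min_i \lambda_{\min}(P(i))$, $\gamma_3 = \max_i \lambda_{\max}(P(i))$ and $c_1 = \max_i \chi(i)$. Under the PGC constraints \eqref{eq_26}--\eqref{eq28} we have $\gamma_1 \ge \bar\gamma_1$, $\gamma_2 \ge \bar\gamma_2$, $\gamma_3 \le \bar\gamma_3$ with $\bar\gamma_3 = \alpha_1 \bar\gamma_2$, which collapses the bound to $\mathbb{E}[x(t)^T x(t)] \le \alpha_1 c_1/\bar\gamma_1$, i.e.\ exactly \eqref{eq_31}. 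It then suffices to bound $c_1$ itself: the relaxation chain in \eqref{eq_37} yields $\chi(i) \le \bar\gamma_3^{\,3}\,\mathrm{tr}\{[B(i)D(i)]^T B(i) D(i) F(i)^T F(i)\}$, while by feasibility of \eqref{eq_39} at the optimum each such trace is at most $\gamma_4^*$. Therefore $c_1 \le \bar\gamma_3^{\,3} \gamma_4^*$, and substitution gives \eqref{eq_41}.

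The main obstacle is purely bookkeeping: I must keep the directions of all the relaxations consistent, in particular checking that $\lambda_{\max}(S(i)) \le 1/\bar\gamma_2$ and $\lambda_{\min}(S(i)) \ge 1/\bar\gamma_3$ in \eqref{eq_29}--\eqref{eq_30} translate correctly back to $\lambda_{\min}(P(i)) \ge \bar\gamma_2$ and $\lambda_{\max}(P(i)) \le \bar\gamma_3$, and that the three separate uses of $\lambda_{\max}(P(i)) \le \bar\gamma_3$ in \eqref{eq_37} (one for each $P(i)$ factor arising after eliminating $Y(i)^{-1}$ via $C(i)S(i) = Y(i)C(i)$) cleanly produce the cube $\bar\gamma_3^{\,3}$. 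Once those substitutions are tracked, the argument is a direct consequence of Proposition~\ref{thm1} applied under the tightened constraints, with no additional analysis needed.
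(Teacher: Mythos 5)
Your proof is correct and follows essentially the same route as the paper's: feasibility of \eqref{eq_39} gives the LMIs \eqref{eq30}, hence the strengthened condition \eqref{eq29} and therefore \eqref{eq_17}, so UBS follows from Propositions~\ref{thm1}--\ref{thm2}, while the bound \eqref{eq_41} is obtained exactly as in the paper by combining \eqref{eq_31} and \eqref{eq_37} with $c_1 \le \bar{\gamma}_3^3 \gamma_4^*$ at the optimum. The one quibble is your intermediate claim $\gamma_1 \ge \bar{\gamma}_1$: from \eqref{eq_26} one only gets $\lambda_{max}(Q(i)) \le -\bar{\gamma}_1 \lambda_{min}(P(i))$, i.e.\ $\gamma_1 \ge \bar{\gamma}_1 \bar{\gamma}_2$, which yields $\gamma_1 \ge \bar{\gamma}_1$ only when $\bar{\gamma}_2 \ge 1$ --- but this imprecision is inherited from the paper's own passage from \eqref{eq_16} to \eqref{eq_31} rather than introduced by you, so it does not distinguish your argument from the published one.
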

\begin{proof}
    Since the problem \eqref{eq_39} is feasible, the LMI conditions~\eqref{eq30} holds, which implies the satisfaction of conditions~\eqref{eq_24}. Therefore, by Proposition \ref{thm2}, the resulting PGC controller is SSC and leads to the system \eqref{eq_1} to be bounded stable in mean square. In view of \eqref{eq_37} as well as the optimum of \eqref{eq_39}, $c_1 = \max_{i \in \mathcal{I}} \{ \chi (i) \} \le \bar{\gamma}_3^3 {\gamma}_4^*$. As a consequence, based on \eqref{eq28} and  \eqref{eq_31}, it yields the required result. This completes the proof.
\end{proof}

\begin{remark} \label{rmk:5}
    It is worth noting that the developed PGC design procedure not only ensures a satisfactory steady-state performance as shown in \eqref{eq_41}, but also allows to specify a guaranteed convergence rate through the parameter $ \bar{\gamma}_1$. This fact can be readily checked by Remark \ref{rmk:2}. 
\end{remark}
\begin{remark} \label{rmk:6}
    {Due to the conservative nature of PGC, the guaranteed bound relies on the selection of the design parameters $\bar{\gamma}_1$, $\bar{\gamma} _3$, and $\alpha _1$ (or $\bar \gamma _2$ due to constraint \eqref{eq28} ). These parameters are generally problem-specific and demand tuning in practice. A useful guideline is to first select a desired convergence rate for $\bar{\gamma _3}$, and then minimize the product $\alpha _1 \bar{\gamma}_1^3$, as it directly relates to the ultimate  steady-state bound in \eqref{eq_41}. It should be noted there is a fundamental trade-off between convergence rate and steady-state accuracy; in addition, faster convergence typically demands larger control efforts and scarifies comfort.}
\end{remark}

\section{adaptive Cruise control design} \label{sec4}
To demonstrate the obtained results, this section considers an adaptive cruise control (ACC) task for an AV.
The behavior of ego vehicle is modeled as a double integrator
\begin{align}
    \dot{x}^{ego}_1 &= x_2^{ego}, \nonumber \\ 
    \dot{x}^{ego}_2 &= u \label{eq1_}
\end{align}
where ${x}^{ego}_1\in \mathbb{R}$ and ${x}^{ego}_2\in \mathbb{R}$ are the position and the velocity of the ego vehicle, respectively, $u\in \mathbb{R}$ is the control input  to be designed. Accordingly, the leading vehicle is described by
\begin{align}
    \dot{x}^{ld}_1 &= x_2^{ld},  \nonumber \\ 
    \dot{x}^{ld}_2 &= 0 \label{eq2_}    
\end{align}
Note that without loss of generality the above constant velocity assumption can be  relaxed to the piecewise constant case.

Let $\tilde{x}_1 = {x}^{ego}_1 - {x}^{ld}_1 - \delta_1^d$ and $\tilde{x}_2 = {x}^{ego}_2 - {x}^{ld}_2 $; note that $\delta_1^{d}$ denotes the desired distance required to be maintained for safety purposes and $\delta_1 = {x}^{ego}_1 - {x}^{ld}_1 $ the actual relative distance between the leading and ego vehicles. 

Based on \eqref{eq1_} and \eqref{eq2_}, the error dynamics of the ACC system is obtained as
\begin{align}
    \dot{\tilde{x}}_1 &= \tilde{x}_2, \nonumber \\ 
    \dot{\tilde{x}}_2 &= u  \label{eq3_}
\end{align}
Denote $\tilde{x} = [\tilde{x}_1, \tilde{x}_2]^T$. \eqref{eq3_} can be written into
\begin{align}
    \dot{\tilde{x}} = A \tilde{x} + B u  \label{eq4_}
\end{align}
with $A$ and $B$ given as
\begin{align}
    A = \begin{bmatrix}
        0 & 1 \\
        0 & 0 
    \end{bmatrix}, \quad    
    B = \begin{bmatrix}
        0  \\
        1  
    \end{bmatrix}
\end{align}

We consider an unreliable measurement as represented by
\begin{align}
     y(t)&=  C(i)\tilde{x}(t) + D(i) \omega(t) \label{eq5_}
\end{align}
where $\omega \in \mathbb{R}^{2}$ stands for the measurement error that is independently Gaussian distributed with zero mean and unit variance. $C(i)$ and $D(i)$ are subject to random switching caused by the imperfect AI-based perception, which are given, respectively, as
\begin{align}
    C(i) &= \left\{\begin{matrix}
\mathrm{diag}(0,1),& i = 0 \\ 
 \mathrm{diag}(1,1), & i =1 \end{matrix} \right.   \label{eq6__}\\
    D(i) &= \left\{\begin{matrix}
\mathrm{diag}(\sigma _{00},\sigma _{01}),& i = 0 \\ 
 \mathrm{diag}(\sigma _{10},\sigma _{11}), & i =1 \end{matrix} \right. \label{eq6_}
\end{align}
Here, $i = r(t) \in \{0,1\}$ indicates the status of the perception system with $0$ for the misdetected case and $1$ for the normal case, and $r(t)$ is govern by a Markov chain with a given generator matrix $Q$ as
\begin{align}
    Q= \begin{bmatrix}
        q_{00} & q_{01} \\
        q_{10} & q_{11} 
    \end{bmatrix}
\end{align}
Note that $Q$ can be estimated off-line from experiments. Additionally, in \eqref{eq6__} the relative velocity estimate is still used even in the misdetected case, which is plausible as it is assumed in our setup that the leading vehicle is driving at constant speed, while the parameter $\sigma_{01}$ in $D(0)$ accounts for the uncertainty of this information.

It is easy to see that \eqref{eq4_} and \eqref{eq5_} together form the PEM-augmented dynamic model, and thus, the control law can be proposed as
\begin{align}
    u(t) = K(i) y(t) \label{eq7_}
\end{align}
Plugging in the control law \eqref{eq7_} into \eqref{eq4_} gives the following closed-loop system
\begin{align}
    d \tilde{x} (t) = [ A+ BKC(i) ]\tilde x(t) dt + B K D(i) dw(t) \label{eq8_}
\end{align}
and let $A_{cl}(i) =  A+ BKC(i)$, $W(i) = BKD(i)$. Then \eqref{eq8_} can be written as
\begin{align}
    d \tilde{x} (t) = A_{cl}(i) \tilde x(t) dt+ W(i) dw(t) \label{eq9_}
\end{align}

The stability of the above closed-loop system \eqref{eq9_} with given controllers can be checked by the sufficient condition presented in Proposition~1. Besides the control analysis, the SSC control design can be resolved by Proposition \ref{thm2},  and furthermore, a PGC cruise controller can be obtained by solving the problem \eqref{eq_39}. 

It is easy to verify when the misdetection occurs, the corresponding subsystem is actually uncontrollable; however, it is still possible to design controllers as presented to stochastically stabilize the overall ACC error system \eqref{eq3_} by making use of the properties of Markov chains.




\section{experiments} \label{sec5}
To verify the effectiveness and robustness of the presented control design methods, this section carries out a series of comparative experiments under three distinct driving scenarios. The considered adaptive cruise control system is equipped with an AI-based perception system subject to both measurement errors and misdetection. The noise intensities are set as $\sigma_{00} = 1$, $\sigma_{01}=1$, $\sigma_{10} = 0.05$ and $\sigma_{11}=0.5$. The probability transition  rate matrix $Q$ is given as $q_{00} = -4$, $q_{01} = 4$, $q_{10} = 0.5$, and $q_{11} = -0.5$. A sample path of this Markov chain is shown in Fig.~\ref{fig:1}, and it can be seen that the misdetection phenomenon within the system occurs occasionally. The output feedback control law \eqref{eq7_} is used in experiments. By following Remark~\ref{rmk:6}, the design parameters are selected as $\bar{\gamma} _1 = 0.8$, $\bar{\gamma} _2 = 0.1$, and $\bar{\gamma} _3 = 1$, and solving problem \eqref{eq_39}, the PGC control gains are obtained as $K_{pgc}(0)= [0, -2.52] $ and $K_{pgc}(1)= [-2.61, -1.76]$. Three commonly used driving strategies, that is, intelligent driving model (IDM), rule based control (RBC), and active learning based control (LBC), are employed as comparison.  
\begin{figure}[!htbp]
\centering
\includegraphics[width=0.5\textwidth]{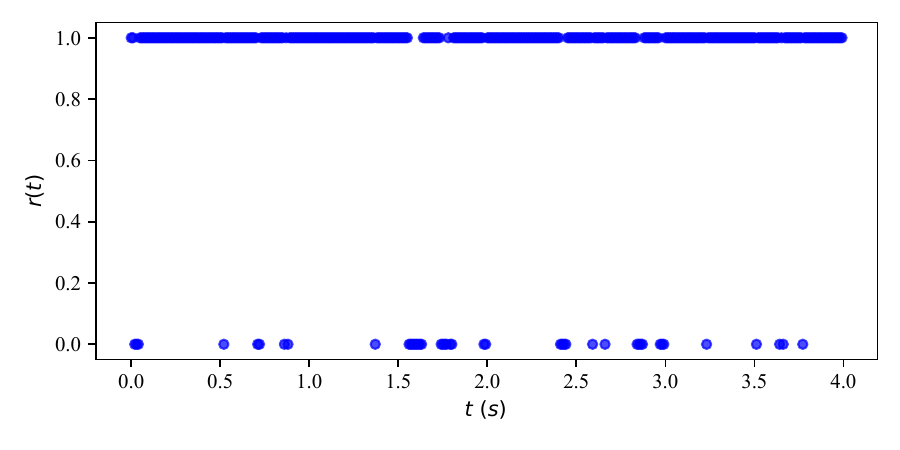}
\caption{Status of the AI-based perception system with a low misdetection rate.}
\label{fig:1}
\end{figure}
\begin{figure}[!htbp]
\centering
\includegraphics[width=0.5\textwidth,trim=0.2cm 0.2cm 0cm 0.2cm, clip]{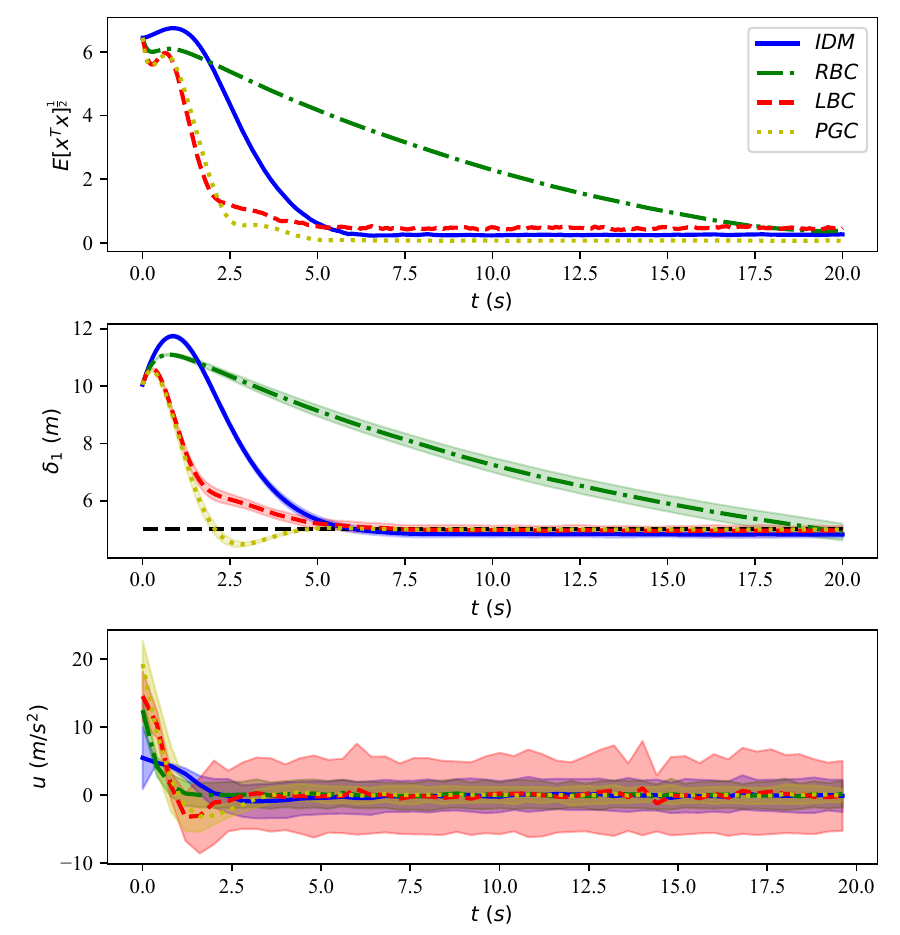}
\caption{ACC performance with a constant speed leading vehicle and a low misdetection rate.}
\label{fig:2}
\end{figure}

The desired relative distance between leading and ego vehicles are set as $\delta _1^d =- 5$. The initial states of the ACC system are given as $[x_1^{ego}(0),x_2^{ego}(0)]^T = [0,1]^T$, $[x_1^{ld}(0),x_2^{ld}(0)]^T = [10,5]^T$. In the first driving scenario, we assume the leading vehicle travels at a constant speed, representing a relatively good driving condition. The results are presented in Fig.~\ref{fig:2}, where the shaded regions denote the standard deviation over 500 sample runs for each method. It can be observed that all four driving strategies in this nominal case are able to stabilize the ACC system, yet the performance differs considerably. As shown in the first two subplots, the proposed PGC and the LBC approaches exhibit faster convergence compared to the IDM and, in particular, the rule based control. In addition, PGC maintains the best steady-state performance among the four approaches, as clearly seen in the first subplot.   Regarding driving comfort, as shown in the third subplot, the accelerating signal $u$  produced by the LBC method is noticeably noisy, leading to the most jerky and inefficient ride. In contrast, the PGC approach yields the most comfort and smooth driving behavior, while IDM and RBC fall in between. Therefore, it is concluded that PGC behaves the best in terms of convergence, steady-state accuracy, and driving comfort.

\begin{figure}[!htbp]
\centering
\includegraphics[width=0.5\textwidth,trim=0.2cm 0.1cm 0cm 0.2cm, clip]{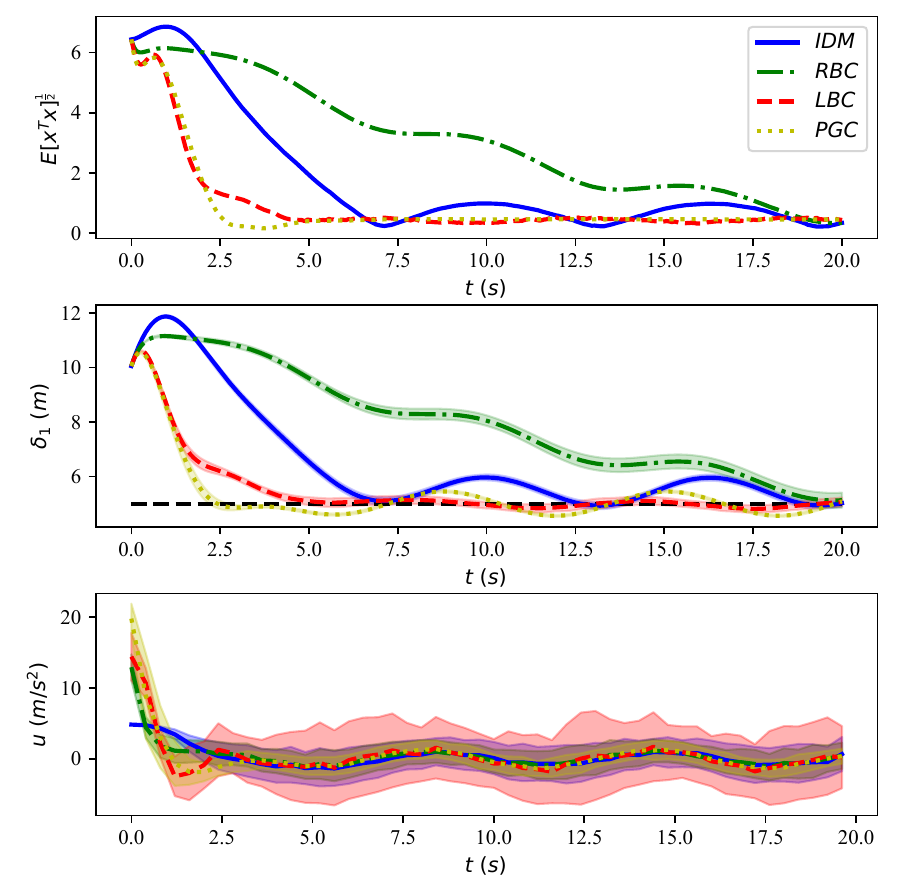}
\caption{ACC performance with a non-constant speed leading vehicle and a low misdetection rate.}
\label{fig:3}
\end{figure}
In the second scenario, we introduce uncertainty in the leading vehicle by setting its acceleration to $\dot{x}_2^{ld} = \sin{t} $. This setup simulates a more congested traffic environment, where the behavior of the leading vehicle becomes less predictable. The results are shown in Fig.~\ref{fig:3}. Similar to the constant speed case, both the PGC and LBC approaches demonstrate faster convergence than the IDM and RBC. It is interesting to see that the learning-based approach achieves as good steady-state performance as PGC, whereas significant performance degradation is observed in both IDM and RBC approaches, as illustrated in the first two subplots. Notably, the PGC approach continues to exhibit the smallest variation in acceleration, while the LBC method as before results in significantly more erratic acceleration behavior.  Hence, it can be concluded that the PGC can maintain its safety and performance, even in the presence of environmental uncertainties.


\begin{figure}[!htbp]
\centering
\includegraphics[width=0.5\textwidth]{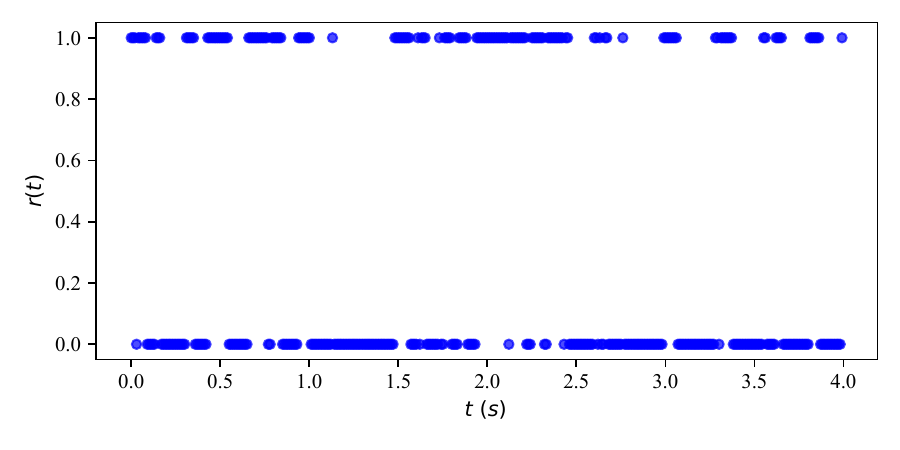}
\caption{Status of the AI-based perception system with a high misdetection rate.}
\label{fig:4}
\end{figure}

\begin{figure}[!htbp]
\centering
\includegraphics[width=0.5\textwidth,trim=0.2cm 0.1cm 0cm 0.2cm, clip]{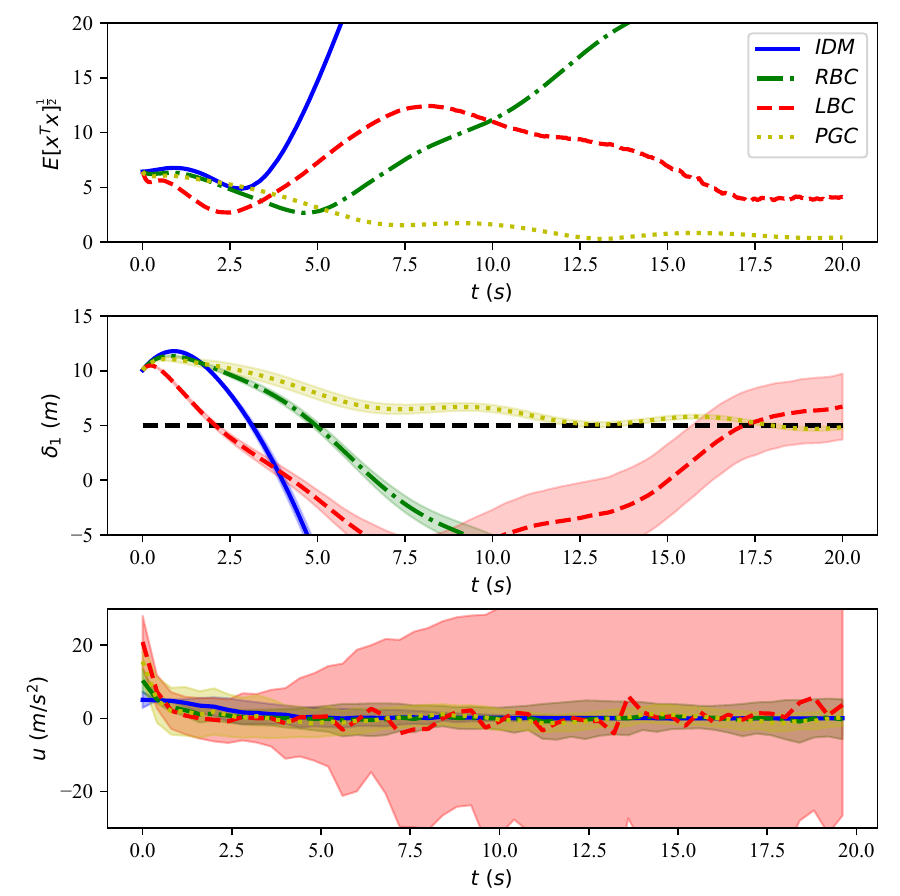}
\caption{ACC performance with a non-constant speed leading vehicle and a high misdetection rate.}
\label{fig:5}
\end{figure}
To further demonstrate robustness under extreme driving conditions (e.g., poor weather and low lighting), we simulate a scenario that includes not only an uncertain leading vehicle but also a perception system with a high misdetection rate. To do so, the transition rate matrix  $Q$ is set to $q_{00}=-4$, $q_{01}=4$, $q_{10}=3$, and $q_{11}=-3$. A typical sample path of this Markov chain is illustrated in Fig.~\ref{fig:4}, which indicates clearly a higher frequency of misdetected observations. The performance of four methods under this challenging scenario is presented in Fig.~\ref{fig:5}. The results show that only the proposed PGC method successfully stabilizes the ACC system and maintains the consistent behavior as the previous two cases, despite a minor degradation in convergence rate. However, other three methods all fail to stabilize the ACC, and particularly, collision occurs, i.e., $\delta_1 < 0$, around 4 and 7 seconds, respectively, as illustrated in the second subplot. Under a higher misdetection rate, IDM and rule based approaches are both diverging, while learning-based approach seems to have possibility to get back to the equilibrium but relies on an extremely unrealistic accelerating operation. Therefore, based on these discussions it is concluded that the proposed PGC approach demonstrates consistent and robust performance across three distinct scenarios. In particular, even under challenging conditions with both environmental uncertainty and a high misdetection rate, the PGC approach is able to maintain stability, safety, efficiency and comfort.



\section{conclusion} \label{sec6}
This paper offers a formal framework for modeling, analysis and synthesis  to address AI-based autonomous vehicles.  Based on recent developments of PEMs, two major classes of AI-induced perception errors are considered, that is, the misdetection that captures the intermittent failures in AI-driven sensing and perception as well as the measurement noise.There uncertainties are modeled using continuous-time Markov chains and Gaussian processes, respectively. Within this framework, analytical results are derived to characterize the closed-loop behavior of AI-based AVs. Furthermore, two output feedback control synthesis approaches are developed using LMIs.  The developed methods are applied to an ACC case study. Extensive experiments under various driving scenarios demonstrates that the proposed PGC strategy outperforms the traditional autonomous driving approaches in terms of both robustness and performance, when AI-induced uncertainties arise.

Despite these remarkable results, the presented framework has certain limitations. For example, the probability transition rate of the Markov chain as well as the covariance of Gaussian noise are known \textit{a priori}. This can be restrictive, as those information may vary depending on the weather, traffic, and lighting conditions.  In addition, other perception error patterns can appear in different AI-based systems, for example, perception bias. These limitations highlight practical challenges and motivate valuable future work. 

\appendix

\section{Appendix A}
In continuous-time Markov chain, the switching between two states is characterized by the following conditional probability transitions
\begin{align}
    {P}(r(t+h) =j | r(t) =i) = \begin{cases}
        q_{ij} h + o(h), &  j \ne i \\
        1 + q_{ii} h + o(h), & \text{otherwise} \label{apdx0}
    \end{cases}
\end{align}
with $q_{ij} \ge 0$ when $j \ne i$, $q_{ii} = -\sum_{j\ne i} q_{ij}$, and $\lim _{h\to 0} o(h)/h =0$.

\begin{definition} \label{def3}
    The infinitesimal operator $\mathcal{L}$ of $V(x(t) , r(t))$,  also called the expected change rate of $V(\cdot)$ at point $(t, x(t) ,r(t) = i )$, is defined by
    \begin{align}
        & \mathcal{L} V(x(t) , r(t)) \nonumber \\
        &  = \lim _{h\to 0} \frac{1}{h} \left[ \mathbb{E}[V(x(t+h),r(t+h) - V(x,i) | x(t)=x, r(t)=i) ]\right] \label{apdx1}
    \end{align}
\end{definition}
For the above mathematical expectation, two possible events exist: first, for $\tau \in (t, t+h]$, $r(\tau ) = i$, i.e., the Markov chain remains at state $i$, and therefore, in this case, $V(x(\tau),r(\tau)) - V(x,i) = 0$. Second case, on the other hand, jumps from state $i$ to state $j$ ($j \ne i$). Based on this observation and the Markov transition rate, one can easily obtain
\begin{align}
    &\mathbb{E}[V(x(t+h),r(t+h) - V(x,i) | x(t)=x, r(t)=i) ] \nonumber \\
    &=\mathcal{L}_{x} V(x(t),i) h + \sum _{j\ne i} q_{ij} [V(x,j) -V(x,i)] h +  o(h) \nonumber \\
    &= \mathcal{L}_{x} V(x(t),i) h + \sum _{j\ne i} q_{ij} V(x,j) h- \sum _{j\ne i} q_{ij} V(x,i) h +  o(h)\nonumber \\
    &= \mathcal{L}_{x} V(x(t),i) h + \sum _{j\ne i} q_{ij} V(x,j) h+ q_{ii} V(x,i) h +  o(h)\nonumber \\
    &= \mathcal{L}_{x} V(x(t),i) h  + \sum _{j \in \mathcal{I}} q_{ij} V(x,j) h +  o(h)  \label{apdx2}
\end{align}
Substituting \eqref{apdx2} into \eqref{apdx1} yields
\begin{align}
    \mathcal{L} V(x(t) , r(t)) = \mathcal{L}_{x} V(x(t),i)  + \sum _{j \in \mathcal{I}} q_{ij} V(x,j) \label{apdx3}
\end{align}
Notice that $x(t)$ is also an Ito process, therefore the following well-known lemma is introduced in order to handle the term $\mathcal{L}_{x} V(x(t),i) $.
\newtheorem{lemma}{Lemma}
\begin{lemma}
    Let $f(x(t), t)$ be a sufficiently smooth function with an Ito process $x(t)$ defined by
    \begin{align*}
        dx(t) = \mu (x(t),t) dt + \sigma (x(t),t) dw(t)
    \end{align*}
    then the differential of $f(x(t),t)$ is given as
    \begin{align*}
        d f(x(t), t) &= \left( \frac{\partial f}{\partial t} + \mu (x,t) \frac{\partial f}{\partial x}  + \frac{1}{2} \sigma^2 (x,t) \frac{\partial ^2 f}{\partial x^2}  \right) dt \\
        & ~ + \sigma (x(t),t)\frac{\partial f}{\partial x} dw(t)
    \end{align*}
\end{lemma}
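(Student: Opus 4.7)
The plan is to establish Itô's formula via a second-order Taylor expansion of $f(x(t+h),t+h)$ around $(x(t),t)$, followed by a careful limit $h \to 0$ that exploits the quadratic-variation property of the Wiener process. The key conceptual point, distinguishing stochastic from classical calculus, is that the Wiener increment $\Delta w$ has magnitude of order $\sqrt{h}$, so $(\Delta w)^2$ is of order $h$ and cannot be discarded; this is precisely what generates the extra drift term $\tfrac{1}{2}\sigma^2 \partial^2_{xx} f$ that is absent in ordinary chain-rule calculations.

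First, I would write the second-order Taylor expansion
$$f(x(t+h),t+h) - f(x(t),t) = \frac{\partial f}{\partial t} h + \frac{\partial f}{\partial x}\Delta x + \frac{1}{2}\frac{\partial^2 f}{\partial x^2}(\Delta x)^2 + R,$$
where $R$ collects higher-order remainders and $\Delta x = x(t+h) - x(t)$. Using the SDE, $\Delta x = \mu(x,t)h + \sigma(x,t)\Delta w + o(h)$, so that
$$(\Delta x)^2 = \sigma^2(x,t)(\Delta w)^2 + 2\mu\sigma\, h\,\Delta w + \mu^2 h^2 + \cdots.$$
I would then classify every term by its order in $h$ once we adopt the bookkeeping rule that $(\Delta w)^2$ is of order $h$. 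This retains only the drift contributions $\partial_t f\,h$, $\mu \partial_x f\,h$, $\tfrac{1}{2}\sigma^2 \partial^2_{xx} f\,h$, together with the martingale piece $\sigma \partial_x f\,\Delta w$, while cross-terms such as $h\Delta w$, $\mu\sigma(\Delta w)h$, and $\partial^2_{xx} f\, \mu^2 h^2$ are higher order and drop out.

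Next, I would sum the telescoping identity over a partition of $[0,t]$ and pass to the limit. The core technical fact is that the quadratic variation of Brownian motion on $[0,t]$ equals $t$ in $L^2$, namely $\sum_k[(\Delta w_k)^2 - h] \to 0$ in mean square; this follows from a direct variance computation using independence of non-overlapping increments together with $\mathrm{Var}[(\Delta w_k)^2] = 2h^2$. Combined with the standard Riemann-sum convergence for the drift terms and the Itô-integral construction for the diffusion term, this yields the integrated form $f(x(t),t) - f(x(0),0) = \int_0^t(\partial_s f + \mu\,\partial_x f + \tfrac{1}{2}\sigma^2 \partial^2_{xx} f)\,ds + \int_0^t \sigma\,\partial_x f\,dw(s)$, whose differential reading is precisely the claimed identity.

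The main obstacle is rigorously controlling the Taylor remainder $R$ and justifying the replacement of $(\Delta w)^2$ by $h$ under the summation. A clean route is to first prove the formula under the assumption that $f \in C^{2,1}$ has bounded derivatives and that $\mu$, $\sigma$ are bounded and adapted: this lets one bound the sum of remainders in $L^1$ using the uniform bound on $\partial^3 f$ and the moments $\mathbb{E}[|\Delta w|^3] = O(h^{3/2})$. The general case then follows by a localization argument with stopping times $\tau_n = \inf\{t : |x(t)| \geq n\}$, together with a mollification of $f$ if needed. Once the quadratic-variation convergence and the remainder control are in place, the remainder of the argument is organizational bookkeeping of terms by powers of $h$ and $\Delta w$.
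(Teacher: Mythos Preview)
Your proposal is a correct and standard derivation of It\^o's formula via second-order Taylor expansion together with the quadratic-variation identity for Brownian motion; the outline, including the localization step for unbounded coefficients, is the textbook route and would go through without difficulty.

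However, there is nothing to compare against: the paper does not prove this lemma at all. It is introduced in the Appendix with the phrase ``the following well-known lemma is introduced'' and is simply quoted as a classical result of stochastic calculus, then immediately applied (together with $\mathbb{E}[dw(t)]=0$) to compute the infinitesimal generator $\mathcal{L}V$. So your detailed argument is correct but strictly more than what the paper itself supplies; in the context of this paper a one-line citation to a standard reference on It\^o calculus would have sufficed.
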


Applying the above lemma and the fact $\mathbb{E}[dw(t)] =0 $, the infinitesimal operator for $f(x(t),t)$ with the underlying Ito process $x(t)$ can be obtained as
\begin{align}
    \mathcal{L} f(x(t) , t) = \frac{\partial f}{\partial t} + \mu (x,t) \frac{\partial f}{\partial x}  + \frac{1}{2} \sigma^2 (x,t) \frac{\partial ^2 f}{\partial x^2} \label{apdx4}
\end{align}

Therefore, using the above result, \eqref{apdx3} can be further calculated as
\begin{align}
    \mathcal{L} V(x(t) , r(t)) &= \mu (x,t) \frac{\partial V}{\partial x}  + \frac{1}{2} \sigma^2 (x,t) \frac{\partial ^2 V}{\partial x^2} \nonumber\\ 
    & \quad+ \sum _{j \in \mathcal{I}} q_{ij} V(x,j) \label{apdx5}
\end{align}

\bibliographystyle{IEEEtran}
\bibliography{reference}

\end{document}